\newtcolorbox{construction}[2][]
{
	breakable,
	colframe = gray!50,
	colback  = gray!10,
	coltitle = gray!10!black,
	before skip = 10pt,
	after skip = 10pt,
	title    = \textbf{#2},
	#1,
}
\newtcolorbox{graphview}[2][]
{
	breakable,
	colframe = black!30,
	colback  = black!0,
	coltitle = gray!10!black,
	before skip = 10pt,
	after skip = 10pt,
	title    = \textbf{#2},
	#1,
}
\renewcommand{\epsilon}{\varepsilon}
\newcommand{\Ex}{\mathop{\mathbb{E}}}
\newcommand{\poly}{\mathop{\mathrm{poly}}}
\newcommand{\polylog}{\mathop{\mathrm{polylog}}}
\newcommand{\N}{\mathbb{N}}
\newcommand{\bits}{\{0,1\}}
\newcommand{\QMA}{\mathsf{QMA}\xspace}
\newcommand{\NP}{\mathsf{NP}\xspace}
\def \ro {\text{r.o.-}}
\newcommand{\cro}[1]{$c$\text{-}\ro}
\newcommand{\PCP}{\mathsf{PCP}\xspace}
\newcommand{\GF}{\mathsf{GF}}
\newcommand{\negl}{\mathsf{negl}}
\newcommand{\ie}{\textit{i}.\textit{e}.\@\xspace}
\newcommand{\eg}{\textit{e}.\textit{g}.\@\xspace}
\def\poly{\mathrm{poly}}
\def\polylog{\mathrm{polylog}}
\def\caH{\mathcal{H}}
\def\caP{\mathcal{P}}
\def\caG{\mathcal{G}}
\newcommand{\makeName}[1]{%
	\expandafter\newcommand\csname#1\endcsname{\mathsf{#1}}}
\newtheorem{theorem}{Theorem}[section]
\newtheorem{open}{Open Problem}
\newtheorem*{theorem*}{Theorem}
\newaliascnt{definition}{theorem}
\theoremstyle{definition}
\newtheorem{definition}[definition]{Definition}
\newtheorem*{definition*}{Definition}
\theoremstyle{plain}
\newaliascnt{lemma}{theorem}
\newtheorem*{lemma*}{Lemma}
\newaliascnt{claim}{theorem}
\newtheorem*{claim*}{Claim}
\newaliascnt{fact}{theorem}
\newtheorem*{fact*}{Fact}
\newaliascnt{observation}{theorem}
\newtheorem*{observation*}{Observation}
\newaliascnt{conjecture}{theorem}
\newtheorem{conjecture}[conjecture]{Conjecture}
\newtheorem*{conjecture*}{Conjecture}
\newaliascnt{corollary}{theorem}
\newtheorem{corollary}[corollary]{Corollary}
\newtheorem*{corollary*}{Corollary}
\newaliascnt{remark}{theorem}
\newtheorem*{remark*}{Remark}
\newaliascnt{proposition}{theorem}
\newtheorem*{proposition*}{Proposition}
\patchcmd{\ALG@step}{\addtocounter{ALG@line}{1}}{\refstepcounter{ALG@line}}{}{}
\newcommand{\ALG@lineautorefname}{Line}
\tikzset{snake it/.style={decorate, decoration=snake}}
\newcommand{\Pisuc}{\protocol_{\sf succinct}}
\newcommand{\LH}{\mathsf{LH}}
\newcommand{\QPCP}{\mathsf{QPCP}}
\newcommand{\lmin}{\lambda_{\sf min}}
\newcommand{\Tr}{\mathrm{Tr}}
\newcommand{\spz}[1]{|#1\rangle}
\newcommand{\rpz}[1]{\langle #1 |}
\newcommand{\yes}{\mathsf{yes}}
\newcommand{\no}{\mathsf{no}}
\newcommand{\ayes}{L_{\sf yes}}
\newcommand{\ano}{L_{\sf no}}
\newcommand{\protocol}{\Pi}
\newcommand{\regfont}{\mathbf}
\newcommand{\state}{\regfont{state}}
\newcommand{\com}{\regfont{com}}
\newcommand{\data}{\regfont{data}}
\newcommand{\proj}[1]{\spz{#1}\rpz{#1}}
\newcommand{\reg}{\regfont{reg}}
\newcommand{\caV}{\mathcal{V}}
\newcolumntype{C}[1]{%
 >{\vbox to 4ex\bgroup\vfill\centering}%
 p{#1}%
 <{\egroup}} 
\def\ShowAuthNotes{1}
\newcommand{\authnote}[2]{\ \\ \textcolor{red}{\parbox{0.9\linewidth}{[{\footnotesize {\bf #1:} { {#2}}}]}}\newline}
\newcommand{\authnote}[2]{}
\newcommand{\ramis}[1]{\textcolor{magenta}{[RM: \emph{#1}]}}
\newcommand{\ramis}[1]{}
\title{Quantum Merkle Trees} % Adds your title
\date{} % Adds the current date to your “cover” page; leave empty if you do not want to add a date
\author{Lijie Chen}
\affiliation{Miller Institute for Basic Research in Science, University of California, Berkeley, CA, 94720, USA}
\email{lijiechen@berkeley.edu}
\author{Ramis Movassagh}
\affiliation{Google Quantum AI, Venice, CA, 90291, USA\\
IBM Quantum, MIT-IBM Watson AI Lab, Cambridge, MA, 02142, USA}
\email{q.eigenman@gmail.com}
\newcommand{\QHROM}{\mathsf{QHROM}}
\newcommand{\Haar}{\mathbb{U}}
\newcommand{\getsR}{\in_{\sf R}}
\newcommand{\QROM}{\mathsf{QROM}}
\newcommand{\commit}{\mathsf{commit}}
\newcommand{\decommit}{\mathsf{decommit}}
\newcommand{\LocalQMA}{\mathsf{LocalQMA}}
\newcommand{\Had}{\mathsf{H}}
\begin{document}
	
	\maketitle

\newcommand{\blkpara}{b}

\begin{abstract}
	Committing to information is a central task in cryptography, where a party (typically called a prover) stores a piece of information (\eg,~a bit string) with the promise of not changing it. This information can be accessed by another party (typically called the verifier), who can later learn the information and verify that it was not meddled with. Merkle trees~\cite{Merkle87} are a well-known construction for doing so in a succinct manner, in which the verifier can learn any part of the information by receiving a short proof from the honest prover. Despite its significance in classical cryptography, there was no quantum analog of the Merkle tree. A direct generalization using the Quantum Random Oracle Model ($\QROM$)~\cite{BonehDFLSZ11} does not seem to be secure. 
	In this work, we propose the \emph{quantum Merkle tree}. It is based on what we call the \emph{Quantum Haar Random Oracle Model} ($\QHROM$). In $\QHROM$, both the prover and the verifier have access to a \emph{Haar} random quantum oracle $\caG$ and its inverse.
	
	Using the quantum Merkle tree, we propose a succinct quantum argument for the Gap-$k$-Local-Hamiltonian problem. Assuming the Quantum PCP conjecture is true, this succinct argument extends to all of $\QMA$. This work raises a number of interesting open research problems.  
	
	%We prove it is secure against semi-honest provers in $\QHROM$ 

\end{abstract}

\section{Introduction}

A commitment scheme~\cite{BrassardCC88} is a cryptographic primitive that allows a party (\ie, a prover) to (1) commit to a piece of information such as a bit string while keeping it hidden from others and (2) reveal the information they have committed to later. Commitment schemes are designed to ensure that a party cannot change the information after they have committed to it. Commitment schemes have numerous applications in cryptography, such as the construction of protocols for secure coin flipping, zero-knowledge proofs, and secure computation.

The Merkle tree~\cite{Merkle87} is an efficient example of commitment schemes, which captures the following scenario: There are two parties, the prover $\caP$ and the verifier $\caV$. $\caP$ first computes a short string called the commitment which is denoted by $\commit(x)$ from a long input string $x$ and sends $\commit(x)$ to $\caV$. Then $\caV$ asks $\caP$ to reveal a subset of bits of $x$ together with a short message that would enable $\caV$ to verify that the string $x$ has not been altered. The security promise is that after $\caP$ sends $\commit(x)$ to $\caV$, then upon $\caV$'s request of any subset of bits, a computational bounded $\caP$ can only reveal those bits faithfully.  Namely, if $\caP$ claims that the $i$-th bit of $x$ is the wrong value $1-x_i$, then her claim will be rejected by $\caV$ with high probability.

The Merkle tree has wide applications in cryptography since it allows $\caP$ to \emph{delegate} a potentially very long string to $\caV$ (\ie, a database) while enabling $\caV$ to maintain an \emph{efficient verifiable} random access to that string (say to any subset of the bits of the string). A well-known application of the Merkle tree is the succinct arguments for $\NP$ from probabilistically checkable proofs~\cite{Kilian92,Micali00} or interactive oracle proofs~\cite{Ben-SassonCS16}, where by succinctness one means that the total communication between the prover and verifier constitutes a small number of bits, say $\polylog(n)$ bits of communication.

Despite being very influential in (classical) cryptography, there is no known quantum analog of the Merkle tree that allows committing to quantum states. Such a quantum analog is appealing since it would allow a party to commit to a large quantum state $\sigma$ while maintaining verifiable access to individual qubits.

Protocols based on the classical Merkle tree are often analyzed in the random oracle model. There are also quantum models such as the Quantum Random Oracle Model~\cite{BonehDFLSZ11} ($\QROM$) for analyzing the quantum attacks against the classical Merkle tree. There are works showing that classical Merkle-tree-based protocols are secure against quantum attacks~\cite{ChiesaMS19,Chiesa21}. These works showed that commitment to classical bit strings by the Merkle tree cannot be broken by quantum adversaries. Here we hope to obtain a quantum analog of the Merkle tree that can be used to commit to quantum states.

In this work, we propose a new random oracle model which we call the Quantum Haar Random Oracle Model ($\QHROM$). We then use it in our construction of the {\it Quantum Merkle tree}. We then use it to propose a quantum analog of Kilian's succinct argument for $\NP$ and conjecture its security.

\subsection{The Merkle Tree Algorithm} Our definition of $\QHROM$ is motivated by our adaptation of the Merkle tree to the quantum setting, so it is instructive to recall the standard Merkle tree algorithm.

Let $\blkpara \in \N$ be the block-length parameter. We assume that both $\caP$ and $\caV$ have access to a random oracle function $h\colon \bits^{2\blkpara} \to \bits^{\blkpara}$. For simplicity of the argument, we will first focus on the simplest non-trivial case of a Merkle tree with two leaves and depth one, and take $n = 2\blkpara$ to be the length of the string that $\caP$ wishes to commit to. Here the string $x$ resides on the leaves and $\commit(x)$ string resides on the root. As we will see shortly, a straightforward adaption of the Merkle tree to the quantum setting is not secure even in this simple setting.

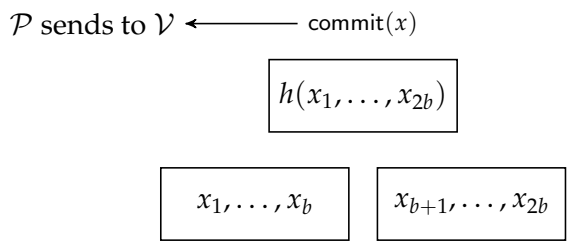
\begin{figure}[H]
	\centering
	\begin{tikzpicture}[->,>=stealth',shorten >=1pt,auto,
	semithick,scale = 1.2]
	\tikzstyle{every state}=[text=black,rectangle, minimum  width=2.5cm]
	\tikzstyle{arrow}=[thick]
	
	\node (comname) at (-0.1,2) {\footnotesize $\commit(x)$};
	\node [state] (com) at (-0.1,1.2) {$h(x_1,\dotsc,x_{2\blkpara})$};
	\node [state] (data1) at (-1.3,0) {$x_1,\dotsc,x_\blkpara$};
	\node [state] (data2) at (1.1,0) {$x_{\blkpara+1},\dotsc,x_{2\blkpara}$};
	
	\draw (comname) -- (-2.1,2);
	\node (send) at (-3.1,2) {$\caP$ sends to $\caV$};
	
	\end{tikzpicture}
	
	\caption{An illustration of the toy example for the classical Merkle tree}
	\label{fig:toy-example-classic}
\end{figure}

In this simplified setting, the protocol starts by $\caP$ simply sending the hash value $\tilde{h} = h(x)$ of $x \in \bits^{2\blkpara}$ as the $\commit(x)$ of length $\blkpara$ to $\caV$ (see~\autoref{fig:toy-example-classic} for an illustration). Then $\caV$ requests the values of a subset of bits in $x$, for which the honest $\caP$ simply responds by revealing the whole string $x$ to $\caV$. Then $\caV$ checks that the string has the same hash value $\tilde{h}$. If a (dishonest) $\caP$ can first commit to $x$ and later convince $\caV$ that its $i$-th bit is $1 - x_i$, then $\caP$ has found two strings $x \ne \tilde{x}$ with $h(x) = h(\tilde{x})$. This requires at least $2^{\blkpara/2}$ queries to the random oracle $h$ due to the birthday paradox, which is infeasible.

\subsection{A Failed Attempt to Adapt Merkle Tree in the Quantum Setting} Let us see how one might directly try to adapt the special case above of the Merkle tree algorithm to the quantum setting. An immediate idea is that, given a $2\blkpara$-qubit quantum state $\spz{\psi} = \sum_{z} \alpha_z \spz{z}$ in the register denoted by $\data$, $\caP$ treats $h$ as a quantum oracle $O_h$\footnote{That is, $O_h \spz{x}\spz{y} = \spz{x}\spz{y \oplus h(x)}$, where $x\in \bits^{2\blkpara}$, $y \in \bits^\blkpara$, and $\oplus$ denotes the entry-wise addition over $\GF(2)$.}, creates $\blkpara$ qubits initialized to $\spz{0^\blkpara}$ in register $\com$, applies $O_h$ to both $\data$ and $\com$ to obtain $\sum_{z} \alpha_z \spz{z} \spz{h(z)}$, and sends the $\com$ register to $\caV$; see~\autoref{fig:toy-example} for an illustration.

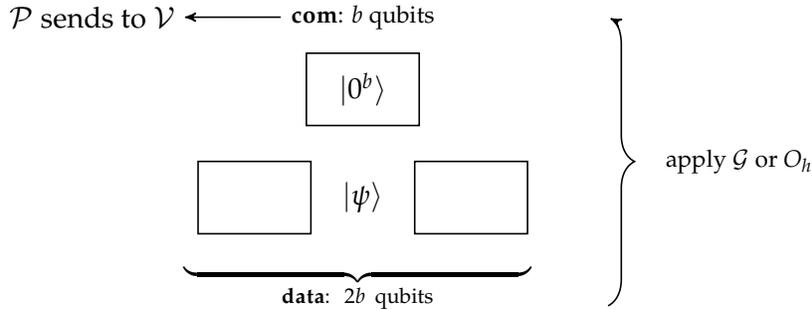
\begin{figure}[H]
	\centering
	\begin{tikzpicture}[->,>=stealth',shorten >=1pt,auto,
	semithick,scale = 1.2]
	\tikzstyle{every state}=[text=black,rectangle, minimum  width=1.5cm]
	\tikzstyle{arrow}=[thick]
	
	\node (comname) at (-0.1,2) {\footnotesize $\com$: $\blkpara$ qubits};
	\node [state] (com) at (-0.1,1.2) {$\spz{0^\blkpara}$};
	\node [state] (data1) at (-1.3,0) {};
	\node [state] (data2) at (1.1,0) {};
	\node (phistate) at (-0.1,0) {$\spz{\psi}$};
	
	\node[text width=5cm] (applyG) at (0,-1) {$\underbrace{~~~~~~~~~~~~~~~~~~~~~~~~~~~~~~~~~~~~}_{\text{$\data$: $2\blkpara$ qubits}}$};
	
	\draw [decorate,decoration={brace,amplitude=10pt,mirror,raise=4pt},yshift=0pt]
	(2.5,-1.2) -- (2.5,2) node [black,midway,xshift=3.0cm] {\footnotesize apply $\caG$ or $O_h$};
	
	\draw (comname) -- (-2.1,2);
	\node (send) at (-3.1,2) {$\caP$ sends to $\caV$};
	
	\end{tikzpicture}
	
	\caption{An illustration of the toy example for the quantum Merkle tree}
	\label{fig:toy-example}
\end{figure}

To reveal qubits in $\spz{\psi}$, $\caP$ simply sends the $\data$ register to $\caV$ as well, and $\caV$ applies $O_h$ again to both $\data$ and $\com$, and measures $\com$ in the computational basis to check if it is $0^{\blkpara}$ and rejects immediately otherwise. However, this is not secure against \emph{phase attack}. After sending $\com$ to $\caV$, for every Boolean function $f\colon \bits^{2\blkpara} \to \bits$, $\caP$ can apply the unitary $\spz{z} \mapsto (-1)^{f(z)} \spz{z}$ to $\data$, and then sends it to $\caV$. One can see that $\caV$ still accepts this state with probability $1$, but $\caP$ has cheated by changing the state from $\sum_{z} \alpha_z \spz{z}$ to $\sum_{z} (-1)^{f(z)} \alpha_z \spz{z}$, which can be entirely a different state for some function $f$.

The issue above is that the mapping $O_h$, $\spz{x}\spz{y} \mapsto \spz{x}\spz{y \oplus h(x)}$, has too much structure to be exploited by the attacker. This immediately suggests to us to consider a more random choice of quantum oracles which indeed we take to be \emph{the most} random choice of quantum oracles: a Haar random quantum oracle. 

Comment: One way to address the phase attack above is to make $O_h$ more complicated. For example, instead of applying $O_h$ once to the registers $\data$ and $\com$, we can repeatedly apply $O_h \Had^{\otimes 2\blkpara}$ several times ($\Had$ denotes the Hadamard gate). We found such a construction more cumbersome and even harder to analyze compared to a Haar random unitary. Moreover, it is conjectured~\cite[Section~6]{JiL018} that similar constructions may already be indistinguishable from a Haar random unitary (see~\autoref{sec:suc-discussions} for more discussions). Hence, it seems more natural to directly work with a Haar random unitary.

\subsection{The Quantum Haar Random Oracle Model ($\QHROM$) and Quantum Merkle Tree} 
We are now ready to introduce the Quantum Haar Random Oracle Model ($\QHROM$). In $\QHROM$ both $\caP$ and $\caV$ have access to a Haar random quantum oracle $\caG$ and its inverse $\caG^\dagger$ that act on $3\blkpara$ qubits (see~\autoref{defi:QHROM} for a precise definition). The protocol between $\caP$ and $\caV$ remains the same for the special case $n = 2\blkpara$ except for replacing $O_h$ by $\caG$. It is easy to see that since $\caG$ completely \emph{obfuscates} the original state $\spz{\psi}$, the phase attack described above no longer applies.

Next, we describe the quantum Merkle tree in the general setting in which $n$ can be arbitrarily large and denotes the number of qubits in the state that $\caP$ wishes to commit to. Given a quantum state $\sigma$ on $n = \blkpara \cdot \ell$ qubits for some $\ell=2^d$ and $d \in \N$,\footnote{We can always pad any quantum state to such length by adding dummy qubits. This at most doubles the number of qubits.} we partition $x$ into $\ell$ consecutive blocks of length $\blkpara$ as $x^{(1)},x^{(2)},\dotsc,x^{(\ell)}$. Then, we build a perfect binary tree with $\ell$ leaves (see~\autoref{fig:binarytree}), where each leaf corresponds to a block of the input. Next, from the leaves to the root, we assign to each node $\alpha$ a $\blkpara$-qubit register $\com_\alpha$ as follows: (1) if $\alpha$ is a leaf, then $\com_\alpha$ is simply the qubits of the assigned block and (2) if $\alpha$ is an intermediate node with two children $\beta$ and $\gamma$, then we initialize $\com_\alpha$ to $\spz{0^\blkpara}$, and apply $\caG$ to the three registers $\com_\beta$, $\com_\gamma$, and $\com_\alpha$. Finally, $\caP$ sends the register $\com_{\sf rt}$ to $\caV$, where ${\sf rt}$ is the root of the binary tree.

Suppose $\caV$ requests the state of the $i$-th block $x^{(i)}$ of the quantum state. To reveal the $i$-th block $x^{(i)}$ on a leaf (which we denote  $\mu$) of the tree $\caP$ sends all the $\com_\alpha$ for nodes $\alpha$ that are the (1) ancestors of $\mu$, (2) siblings of an ancestor of $\mu$, or (3) $\mu$ or the sibling of $\mu$. $\caV$ then ``undoes'' all the applied $\caG$ in the exact reverse order by applying $\caG^\dagger$ to the registers sent by $\caP$ starting from the register $\com_{\sf rt}$, and then from the root downwards to the leaves. After that, for every ancestor $\alpha$ of $\mu$, $\caV$ checks that $\com_\alpha$ is $\spz{0^\blkpara}$ by measuring it in the computational basis.   To illustrate, if $\caV$ asks for the block $x^{(2)}$, then $\caP$ sends the corresponding $\com_\alpha$ registers for all diamond shape nodes in~\autoref{fig:binarytree-intro}.

\begin{figure}
	\begin{center}
		\begin{tikzpicture}[level/.style={sibling distance=120mm/#1},scale = 0.5]
		\node [circle,draw]{$1$}
		child {
			node [diamond,draw] {$2$}
			child {
				node [diamond,draw] {$4$}
				child {
					node {$\vdots$} 
					child {node [diamond,draw,scale = 0.75,  minimum size=1.3cm] (aa) {\footnotesize$\ell$}}
					child {node [diamond,draw,scale = 0.75,  minimum size=1.3cm] (a) {\footnotesize$\ell + 1$}}
				}
				child {node {$\vdots$}}
			}
			child {
				node [diamond,draw] {$5$}
				child {node {$\vdots$}}
				child {node {$\vdots$}}
			}
		}
		child {
			node [diamond,draw] {$3$}
			child {
				node [circle,draw] {$6$}
				child {node {$\vdots$}}
				child {node {$\vdots$}}
			}
			child {
				node [circle,draw] {$7$}
				child {node {$\vdots$}}
				child {node {$\vdots$}
					child {node [circle,draw,scale = 0.75,  minimum size=1.3cm] (b) {\footnotesize$2\ell - 2$}}
					child {node [circle,draw,scale = 0.75,  minimum size=1.3cm] (bb){\footnotesize$2\ell - 1$}}
				}
			}
		};
		\path (a) -- (b) node [midway] {$\cdots\cdots\cdots\cdots\cdots\cdots\cdots\cdots\cdots\cdots\cdots\cdots$};
		\node [below of= aa] {$x^{(1)}$};
		\node [below of= a] {$x^{(2)}$};
		\node [below of= b] {$x^{(\ell-1)}$};
		\node [below of= bb] {$x^{(\ell)}$};
		\end{tikzpicture}
		\caption{An illustration of the quantum Merkle tree with $\ell = 2^{d}$ input blocks; when block $x^{(2)}$ is requested by $\caV$, $\caP$ sends all the diamond shape nodes.}\label{fig:binarytree-intro}
	\end{center}
\end{figure}

Comment: So how might one heuristically instantiate a Haar-Random unitary? One might use a random quantum circuit that well approximates the behavior of a Haar unitary. For example, one might use a polynomially deep circuit. One way to formalize the degree of approximation is via the ideas in $k$-design~\cite{brandao2016local}. 

\subsection{A Candidate for Succinct Quantum Argument for Gap-$k$-$\LH$ in $\QHROM$}

Similar to Kilian's succinct argument for $\NP$, the quantum Merkle tree naturally suggests a succinct argument $\Pisuc$ for the Gap Local Hamiltonian Problem. We first recall its definition below.

\begin{definition}
	(Gap-$k$-Local Hamiltonian Problem) Given  $\alpha,\beta$ with $0<\alpha<\beta\le 1$ and a $k$-local Hamiltonian with $m$ local terms $\{ H_i \}_{i \in [m]}$ such that $0 \le  H_i \le I$, decide whether $\lmin(\sum_{i=1}^{m}H_i)$ is at most $\alpha m$ or at least $\beta m$. Below we abbreviate this problem by $(\alpha,\beta)\text{-}k\text{-}\LH$.
\end{definition}

Formally, in $\Pisuc$ the honest prover $\caP$ applies the quantum Merkle tree to a ground state $\sigma$ of $\sum_{i=1}^{m} H_i$, and sends $\com_{\sf rt}$ to $\caV$. Then $\caV$ draws an integer $i$ from $\{1,2,\dotsc,m\}$ uniformly at random and asks $\caP$ to reveal the qubits in the support of the term $H_i$.  $\caV$ does the decommitment from the root towards the qubits in the support of $H_i$ as described above. If in this decommitment phase the ancestors of the qubits in the support of $H_i$ all result in $0^\blkpara$ it proceeds to the last step. In the last step, it measures the POVM $\{H_i,1 - H_i\}$ on the qubits in the support of $H_i$ and rejects if it sees $H_i$. Indeed, this is the natural analog of Kilian's succinct argument~\cite{Kilian92} in the quantum setting.

We prove that if $\caP$ follows the protocol, then (1) when $\lambda_{\sf min}(\sum_{i=1}^m H_i) \le \alpha \cdot m$, $\caP$ can make $\caV$ accept with probability at least $1 - \alpha$, and (2) when $\lambda_{\sf min}(\sum_{i=1}^m H_i) \ge \beta \cdot m$, $\caP$ cannot force $\caV$ to accept with a probability greater than $1 - \beta < 1 - \alpha$ (See~\autoref{theo:Pisuc-c-and-s} for details). By a sequential repetition argument, the completeness $1-\alpha$ and the soundness $1-\beta$ can be boosted to $1 - n^{-\omega(1)}$ and $n^{-\omega(1)}$ respectively where $\omega (1)$ means super constant.

However, a malicious $\caP$ may not follow the protocol, but instead come up with some arbitrary states for the different nodes that are not the result of the quantum Merkle tree algorithm and send those to $\caV$ instead. We currently do not know how to analyze such an arbitrary attack, but we conjecture the following:
\begin{conjecture}\label{conj:main-conj}
	For the constants $k \in \N$ and $0 < \alpha < \beta \le 1$, $\Pisuc$ (with sequential repetition) for $(\alpha,\beta)$-$k$-$\LH$ has completeness $1 - n^{-\omega(1)}$ and soundness $n^{-\omega(1)}$ in $\QHROM$.
\end{conjecture}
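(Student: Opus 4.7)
The plan is to follow Kilian's paradigm. Completeness of $1-n^{-\omega(1)}$ follows by invoking \autoref{theo:Pisuc-c-and-s} (which gives per-round completeness at least $1-\alpha$ via an honest prover who commits to a ground state $\sigma$) and then amplifying via sequential repetition with majority decision, as already sketched preceding the conjecture. All the work is in single-round soundness against an arbitrary cheating prover $\caP^*$.

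The strategy is Kilian-style: construct an analysis-only \emph{extractor} $\Ext$ which, from $\caP^*$'s first message in register $\com_{\sf rt}$, recovers an effective $n$-qubit witness $\tdsigma$, and then show that for every $i \in [m]$ the acceptance probability on query $i$ is at most $1 - \tr(H_i \tdsigma) + \negl(n)$. Averaging over $i$ uniform in $[m]$ and using $\sum_i \tr(H_i \tdsigma) \ge \lmin(\sum_i H_i) \ge \beta m$ would then bound single-round soundness by $1 - \beta + \negl(n)$, which sequential repetition amplifies to $n^{-\omega(1)}$. The extractor decommits top-down: at each internal node $\alpha$ it appends two fresh $\blkpara$-qubit registers $\com_\beta,\com_\gamma$ initialized to $|0^\blkpara\rangle$, applies $\caG^\dagger$ to $(\com_\beta,\com_\gamma,\com_\alpha)$, and postselects on $|0^\blkpara\rangle$ in the last block; iterating down to the leaves and tracing out internal registers yields $\tdsigma$.

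The heart of the argument is a quantum \emph{binding property} of the Haar-random commit: conditioned on passing all ancestor checks in the verifier's decommitment, the state delivered on the queried leaf block is close in trace distance to the corresponding reduced state of $\Ext(\com_{\sf rt})$. Intuitively, the commit map $|x\rangle|y\rangle \mapsto \caG|x\rangle|y\rangle|0^\blkpara\rangle$ is an isometry into a uniformly random $2^{2\blkpara}$-dimensional subspace $V$ of the $3\blkpara$-qubit Hilbert space, and a polynomial-query adversary's knowledge of $V$ is confined to a $\poly(n)$-dimensional subspace. Hence any $|\phi\rangle$ the prover reveals as a valid commit must be $\negl(n)$-close to $V$, forcing it to be close to $\caG\bigl(\Ext(|\phi\rangle) \otimes |0^\blkpara\rangle\bigr)$. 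Iterating over the $O(\log \ell)$ levels of the tree and composing errors via a union bound then yields binding for the whole tree; because $\caV$ only projects onto $|0^\blkpara\rangle$ on the ancestors, the gentle-measurement lemma converts trace-distance closeness into an additive loss in acceptance probability.

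The main obstacle is precisely this Haar-oracle binding against adversaries that query both $\caG$ and $\caG^\dagger$ in superposition. Classical Merkle-tree soundness uses collision resistance of the hash, and the $\QROM$ variant relies on Zhandry's compressed-oracle technique, but neither has an established analogue for Haar-random unitaries. The most promising route is a hybrid argument replacing $\caG$ by an approximate unitary $t$-design (cf.~\cite{brandao2016local}) with $t = \poly(n)$, after which concentration-of-measure tools such as Levy's lemma on $\Haar(2^{3\blkpara})$ and moment bounds for $t$-designs can control the adversary's advantage in predicting vectors in $V$. Two further subtleties complicate this: (i) queries to $\caG^\dagger$ effectively double the adversary's access and the $t$-design guarantee must hold for inverse queries as well; and (ii) the Haar oracle is shared across sequential repetitions, so binding must be upgraded to a simulation-based formulation that prevents $\caP^*$ from adaptively redefining the extracted witness between rounds.
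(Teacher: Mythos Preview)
The statement you are attempting to prove is \autoref{conj:main-conj}, which in the paper is explicitly a \emph{conjecture}: the authors state that they ``currently do not know how to analyze such an arbitrary attack'' and list the absence of a compressed-oracle analogue for $\QHROM$ as the first open problem. There is therefore no proof in the paper to compare your attempt against; \autoref{theo:Pisuc-c-and-s} establishes completeness and soundness only against \emph{semi-honest} provers, and the general soundness you are targeting is precisely what is left open (restated as \autoref{conj:soundness}).

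Your outline follows the natural Kilian template, and you correctly flag the central missing ingredient (a binding lemma for Haar-random commitments against superposition queries to $\caG$ and $\caG^\dagger$). However, the extractor you describe is not well-defined: given only the $\blkpara$-qubit root register $\com_{\sf rt}$, appending fresh $\spz{0^\blkpara}$ children and applying $\caG^\dagger$ cannot recover the prover's committed $2\blkpara$-qubit children, since those qubits physically remain with the prover and are entangled with $\com_{\sf rt}$. In Kilian's classical argument the extractor \emph{rewinds} the prover to obtain openings on every path; any quantum analogue here would similarly have to interact with (or simulate) $\caP^*$'s third-message behavior, which immediately runs into quantum rewinding and state-disturbance issues on top of the binding question. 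Note also the paper's discussion in \autoref{sec:suc-discussions}: by the Schr\"odinger--HJW theorem a computationally \emph{unbounded} prover can cheat even in the depth-one tree, so any binding statement must be computational (query-bounded) rather than statistical, and your proposed hybrid via $t$-designs would need to track this distinction carefully. As it stands, the proposal is a reasonable research plan but not a proof; the key lemma you identify as ``the main obstacle'' is exactly the obstacle the paper leaves open.
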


\subsection{Open Questions and Follow-up Works}

We believe our inability to prove~\autoref{conj:main-conj} is mainly due to the lack of tools available for analyzing this new $\QHROM$ setting. We remark that only two years ago~\cite{ChiesaMS19} managed to prove that the succinct argument for $\NP$~\cite{Kilian92,Micali00} is secure in the $\QROM$ model by using the recently proposed \emph{compressed oracles} technique introduced in~\cite{Zhandry19} which gives a nice way to analyze $\QROM$. To prove the security of our succinct argument for Gap-$k$-$\LH$, one likely needs similar advances for analyzing the $\QHROM$. We now list some specific open problems:

\begin{open}
	Is there an analog of the compressed oracle technique in~\cite{Zhandry19} for the $\QHROM$?
\end{open}
Above we generalized Kilian's constant-round succinct argument~\cite{Kilian92} to the quantum setting and conjectured its soundness. A natural open question is whether we can generalize Micali's non-interactive succinct argument for $\NP$~\cite{Micali00} to the quantum settings as well.

\begin{open}
	Is there an analog of Micali's non-interactive succinct argument for Gap-$k$-$\LH$?
\end{open}

A particularly useful feature of previous succinct arguments for $\NP$~\cite{Kilian92,Micali00} is that they can be made \emph{zero-knowledge} with minimal overhead. A natural open question is whether we can make our proposed succinct argument for Gap-$k$-$\LH$ zero-knowledge as well.

\begin{open}
	Is there a zero-knowledge succinct argument for Gap-$k$-$\LH$ in $\QHROM$?
\end{open}

%Another open question is whether we can instantiate the protocol $\Pisuc$ in the more standard $\QROM$. A natural way to do this is to simulate a Haar random unitary by a standard random oracle; see~\cite[Section~6]{JiL018} for some candidate constructions. 

%\begin{open}
%	Is there a succinct argument for Gap-$k$-$\LH$ in $\QROM$?
%\end{open}

\paragraph*{Subsequent work.} Finally, we remark that this paper formed the basis of the ideas in an exciting subsequent work~\cite{gunn2022commitments}. They proved the security of a tree commitment similar to what is presented here but from standard (quantum) cryptographic assumptions. Note that it is not a priori clear what "security" even means for commitments to quantum states\footnote{Although such commitment scheme has appeared implicitly in several quantum cryptographic protocols~\cite{BroadbentJSW20,ColadangeloVZ20,BroadbentG22}.}, and a major contribution of~\cite{gunn2022commitments} is to formally define the security of commitments to quantum states. We refer readers to~\cite{gunn2022commitments} for an overview of more prior works on quantum commitment schemes. As far as we know, the security of the precise protocol (in the $\QHROM$) given in this paper remains open.
	
\section{Preliminaries}\label{sec:prelim}

\subsection{Notation}

We always denote by $\log$ the logarithm in base $2$. We denote by $[n]$ the set of integers $\{1,2,\dots,n\}$.  Let $\reg$ be a register of $n$ qubits. For each $i \in [n]$, $\reg(i)$ denotes the $i$-th qubit in $\reg$, and $\reg[\ell,r]$ denotes the qubits from $\reg(\ell)$ to $\reg(r)$. The corresponding Hilbert space is denoted by $\caH_{\reg}$. For $k$ pairwise-disjoint sets $S_1,\dotsc,S_k$, we use $\bigsqcup_{i \in [k]} S_i$ to denote their union. We say a function $\alpha \colon \N \to [0,1]$ satisfies $\alpha(n) \le \negl(n)$ (\ie, $\alpha$ is \emph{negligible}), if for all constant $k \ge 1$, $\lim_{n \to \infty} \alpha(n) \cdot n^k = 0$ (\ie, $\alpha(n) = o(n^{-k})$ for every $k \in \N$).

For a quantum state $\sigma$ on $n$ qubits and a subset $S \subseteq [n]$, $\sigma_{S} \coloneqq \Tr_{[n] \setminus S}[\sigma]$ is the reduced density matrix. For a quantum state $\spz{\psi} \in \caH_{\reg}$, for simplicity we sometimes use $\psi$ to denote the corresponding density matrix $\psi=\proj{\psi}$. Given a unitary sequence $U_{1},\dotsc,U_{T}$, we write $U_{[\ell,r]}$ to denote the product $U_{r}U_{r-1}\cdots U_{\ell}$ for ease of notation.

For two quantum states $\sigma$ and $\rho$, we use $\| \sigma - \rho \|_1$ to denote their trace distance. We also write $x \getsR A$ to mean that $x$ is drawn from the set $A$ uniformly at random.

\subsection{The Quantum Haar Random Oracle Model}

We will consider the \emph{Quantum Haar random oracle model} ($\QHROM$), in which every agent (prover and verifier) gets access to a Haar random oracle $\caG$ acting on $\lambda$ qubits and its inverse $\caG^\dagger$, where $\lambda$ is the so-called \emph{security parameter}.

We denote by $\Haar(N)$ the set of all $N \times N$ unitaries. By $\caG \getsR \Haar(N)$ we mean that $\caG$ is an $N \times N$ unitary drawn from the Haar measure.

\begin{definition}\label{defi:QHROM}
	An interactive protocol $\Pi$ between the prover $\caP$ and verifier $\caV$ is a proof system for a promise problem $L=(\ayes,\ano)$ with completeness $c(n,\lambda)$ and soundness $s(n,t,\lambda)$ in $\QHROM$, if the following holds:
	\begin{description}
		\item[] \textbf{$\caP$ and $\caV$:} $\caP$ and $\caV$ are both given an input $x \in \ayes \cup \ano$. $\caV$ is polynomial-time and outputs a classical bit indicating acceptance or	rejection of $x$, and $\caP$ is unbounded. Both $\caV$ and $\caP$ are given access to a Haar random quantum oracle $\caG$ and its inverse $\caG^\dagger$ that act on $\lambda$ qubits (that is, $\caG \getsR \Haar(2^\lambda)$). Let $n = |x|$. 
		\item[] \textbf{Completeness:}
		{ If $x\in\ayes$, 
			\[
			\Ex_{\caG \getsR \Haar(2^\lambda)} \Pr[(\caV^{\caG,\caG^\dagger} \leftrightarrows \caP^{\caG,\caG^\dagger} )(x) = 1] \geq c(n,\lambda).
			\]
		}
		\item[] \textbf{Soundness:} {If $x\in\ano$, for every $t \in \N$ and  any unbounded prover $\caP^*$ making at most $t$ total queries to $\caG$ and $\caG^\dagger$, we have that
			\[
			\Ex_{\caG \getsR \Haar(2^\lambda)} [(\caV^{\caG,\caG^\dagger}  \leftrightarrows (\caP^*)^{\caG,\caG^\dagger} )(x) = 1)] \leq s(n,t,\lambda).
			\]
		}
	\end{description}
	In the above $\leftrightarrows$" denotes the interactive nature of the protocol between $\caP$ and $\caV$.
\end{definition}

We remark that in the soundness part, the only restriction on a malicious prover $\caP^*$ is the number of queries it can make to $\caG$ and $\caG^\dagger$. In particular, this means that even if $\caP^*$ has unbounded computational power, as long as it makes a small number of queries to $\caG$ and $\caG^\dagger$, it cannot fool the verifier.

\subsection{Quantum Local Proofs}

Next, we provide formal definitions of $\LocalQMA$. For a reader familar with $\QMA$ in the following definition, one can think of $x$ as the classical description of the local Hamiltonian problem, and $m(n)$ as the number of terms in it (\ie,~$H=\sum^m_{i=1} H_i$).

\begin{definition}[$(k,\gamma)$-$\LocalQMA$]\label{defi:localQMA}
	For two constants $k,\gamma \in \N$, a promise problem $L = (L_{\yes}, L_{\no})$ is in the complexity class $(k,\gamma)$-$\LocalQMA$ with soundness $s(n)$ and completeness $c(n)$ if there are polynomials $m$ and $p$ such that the following hold:
	
	\begin{itemize}
		\item (\textbf{A $k$-local verifier $V_L$}) Let $n = |x|$. There is a verifier $V_L$ that acts as follows:
		\begin{enumerate}
			\item $V_L$ gets access to a $p(n)$-qubit proof $\sigma$ for $L$ and draws $i \getsR [m(n)]$. $V_L$ then computes in $\poly(n,k,\gamma)$ time a $k$-size subset $S_i \subseteq [p(n)]$ and a $\gamma$-size quantum circuit $C_i$ that is over the Clifford + T gate-set and acts on $k$ qubits. $C_i$ may use $\gamma$ ancilla qubits, with the first ancilla qubit being the output qubit.
			
			\item $V_L$ next applies $C_i$ to the restriction of $\sigma$ on qubits in $S_i$ and measures the first ancilla qubit. $V_L$ accepts if the outcome is $1$ and rejects otherwise.
		\end{enumerate}
		
		\item (\textbf{Completeness}) If $x \in L_{\yes}$, there is a $p(n)$-qubit state $\sigma$ such that $V_L$ accepts $\sigma$ with probability at least $c(n)$.			
		
		\item (\textbf{Soundness}) If $x \in L_{\no}$, $V_L$ accepts every $p(n)$-qubit state $\sigma$ with probability at most $s(n)$.
		
		\item (\textbf{Strongly explicit}) Moreover, we say that $V_L$ is strongly explicit, if $V_L$ computes $S_i$ and $C_i$ in $\poly(\log n,k,\gamma)$ time instead of $\poly(n,k,\gamma)$ time.
	\end{itemize}
\end{definition}

We will use $(k,\gamma)$-$\LocalQMA_{s,c}$ to denote the class above for notational convenience.

\subsection{The Quantum $\PCP$ Conjecture}\label{sec:QPCP}

We first recall the quantum $\PCP$ conjecture~\cite{AharonovALV09,AharonovAV13}.

\begin{conjecture}[$\QPCP$ conjecture]
	There are constants $k \in \N$ and $\alpha,\beta$ satisfying $0 < \alpha < \beta \le 1$ such that $(\alpha,\beta)$-$k$-$\LH$ is $\QMA$-complete.
\end{conjecture}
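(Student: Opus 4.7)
The quantum PCP conjecture is one of the central open problems in quantum complexity theory, so any "proof proposal" is inherently speculative. That said, the plan would be to emulate Dinur's gap-amplification approach to the classical $\PCP$ theorem. One starts from the fact that $k$-$\LH$ with promise gap $1/\poly(n)$ is already $\QMA$-complete (Kitaev's quantum Cook–Levin theorem, together with the reductions of Kempe–Kitaev–Regev bringing $k$ down to a constant). The goal is then to transform an instance with inverse-polynomial promise gap into one with constant gap, while keeping $k$ and the number of terms $m$ polynomially bounded.

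First, I would attempt a quantum analogue of Dinur's combinatorial amplification: form the "constraint graph" on the qubits with one vertex per qubit and one hyperedge per Hamiltonian term, derandomize to a constant-degree expander via a preprocessing step, and then replace each term by a composite term that checks all original terms lying within a length-$t$ walk of a vertex. Classically, one proves that if any small fraction of vertices are unsatisfied then a constant fraction of the new, longer-walk terms are violated, via the expander mixing lemma. The commuting-Hamiltonian case of this was handled by Aharonov–Arad–Landau–Vazirani, so one would try to push their ideas to the non-commuting setting. Second, one would need a composition step relying on quantum locally testable codes with constant rate and constant soundness; here the recent progress on good quantum LDPC codes and the $\NLTS$ theorem of Anshu–Breuckmann–Nirkhe provides partial but still insufficient machinery.

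The hard part will be the non-commuting gap-amplification step. The classical argument crucially exploits the existence of a global assignment from which any local assignment can be read off, and rejection probabilities can be decomposed term-by-term. In the quantum setting, the ground state of a non-commuting Hamiltonian need not assign a well-defined reduced state to overlapping neighborhoods in a way that is consistent between them, and the natural "quantum assignment tester" runs headlong into the no-cloning theorem, since one cannot independently read out the state on two overlapping windows. Finding a replacement for Dinur's amplification that controls entanglement across overlapping local windows — or alternatively, constructing a robust quantum low-degree test against entangled cheating witnesses — is, in my estimation, the principal obstruction, and is exactly what has kept the conjecture open.
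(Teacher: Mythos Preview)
The statement you were asked to prove is labeled a \emph{conjecture} in the paper, and the paper offers no proof of it whatsoever: it merely recalls the quantum $\PCP$ conjecture from the literature (citing Aharonov--Arad--Landau--Vazirani and Aharonov--Arad--Vidick) and then uses it as a conditional hypothesis, e.g.\ in \autoref{cor:QPCP-and-LocalQMA} and \autoref{cor:suc-protocol-for-LocalQMA}. So there is no ``paper's own proof'' to compare against.

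You correctly recognize this at the outset, flagging that the conjecture is open and that any proposal is speculative. The sketch you give --- a Dinur-style gap amplification combined with quantum locally testable codes, citing the commuting case and the recent $\NLTS$ progress --- is a reasonable summary of the research landscape, and your identification of the obstruction (no global assignment, no-cloning blocking overlapping-window consistency, the lack of a robust quantum analogue of the assignment-tester/low-degree-test step) is accurate. But none of this constitutes a proof, nor does the paper pretend to have one. The honest answer here is simply that the statement is a famous open conjecture; the paper does not prove it, and neither does your proposal.
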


In particular, the following corollary is immediate from the definition of $(\alpha,\beta)$-$k$-$\LH$.

\begin{corollary}\label{cor:QPCP-and-LocalQMA}
	If $\QPCP$ holds, then there are constants $k,\gamma \in \N$ and $c,s \in [0,1]$ satisfying that $s < c$, such that
	\[
	\QMA \subseteq (k,\gamma)\text{-}\LocalQMA_{s,c}.
	\]
\end{corollary}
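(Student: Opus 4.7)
The plan is to build the $\LocalQMA$ verifier directly from the reduction promised by QPCP, using a standard Naimark dilation together with Solovay–Kitaev to realize the POVM $\{H_i, I - H_i\}$ by a constant-size Clifford+T circuit.

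Concretely, fix $L \in \QMA$. By the QPCP conjecture, there exist constants $k$ and $0 < \alpha < \beta \le 1$ and a polynomial-time mapping reduction sending $x$ to a description of a $k$-local Hamiltonian $H = \sum_{i=1}^{m(n)} H_i$ on $p(n)$ qubits with $0 \le H_i \le I$, such that $x \in L_\yes$ implies $\lmin(H) \le \alpha m$ and $x \in L_\no$ implies $\lmin(H) \ge \beta m$. I would then define the $(k,\gamma)$-$\LocalQMA$ verifier $V_L$ to take a $p(n)$-qubit proof $\sigma$, compute the description of $H$ from $x$, sample $i \getsR [m]$, read off $S_i \subseteq [p(n)]$ as the (at most) $k$-qubit support of $H_i$, and apply a Clifford+T circuit $C_i$ acting on $S_i$ plus a constant number of ancilla qubits whose first ancilla, when measured, outputs $1$ with probability (approximately) $\mathrm{tr}((I - H_i) \sigma_{S_i})$. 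The key quantitative step is then to show completeness $c := 1 - \alpha - \epsilon$ and soundness $s := 1 - \beta + \epsilon$ for sufficiently small $\epsilon > 0$, using
\[
\Ex_{i \getsR [m]}\mathrm{tr}(H_i \sigma_{S_i}) = \tfrac{1}{m} \mathrm{tr}(H \sigma),
\]
together with the QPCP gap; since $\alpha < \beta$ are constants, any $\epsilon < (\beta - \alpha)/2$ yields $s < c$ as required.

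The one substantive piece is the construction of $C_i$. Since $H_i$ is a $k$-qubit Hermitian with $0 \le H_i \le I$ and $k$ is an absolute constant, I would apply Naimark's dilation theorem to produce a $(k+1)$-qubit unitary $U_i$ whose projective measurement of the ancilla in the computational basis implements the two-outcome POVM $\{H_i, I - H_i\}$ exactly; such a $U_i$ always exists and has dimension $2^{k+1} = O(1)$. Then I would invoke the Solovay–Kitaev theorem to approximate $U_i$ to operator-norm error $\epsilon$ by a Clifford+T circuit of size $\poly(k, \log(1/\epsilon)) = O(1)$, taking $\gamma$ to be this constant bound (plus the single output ancilla). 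The per-term approximation error $\epsilon$ in operator norm translates into at most $\epsilon$ additive error in each acceptance probability, which is what feeds into the $c,s$ above.

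The main obstacle is mild: one must verify that the Solovay–Kitaev approximation to $U_i$ can be computed by $V_L$ in $\poly(n)$ time from the polynomial-size classical description of $H_i$ produced by the reduction, and that the induced perturbation of the acceptance probability is at most $\epsilon$ uniformly over all $i$ and $\sigma$. Both are standard: the description of $H_i$ has $\poly(n)$ bits, Solovay–Kitaev is effective and runs in time $\polylog(1/\epsilon)$ given $U_i$, and a standard norm argument shows $|\mathrm{tr}(M \rho) - \mathrm{tr}(M' \rho)| \le \|M - M'\|$ for any density matrix $\rho$. With these routine checks in place, the witnessing verifier for $(k,\gamma)$-$\LocalQMA_{s,c}$ is exactly $V_L$, completing the corollary.
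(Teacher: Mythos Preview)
Your proposal is correct and is precisely the argument the paper has in mind: the paper states the corollary as ``immediate from the definition of $(\alpha,\beta)$-$k$-$\LH$'' and gives no proof, and your Naimark-dilation-plus-Solovay--Kitaev construction (with $c = 1-\alpha-\epsilon$, $s = 1-\beta+\epsilon$) is exactly the standard way to spell out that immediacy within the $(k,\gamma)$-$\LocalQMA$ framework of~\autoref{defi:localQMA}.
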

	
\section{A Candidate Succinct Argument for $\LocalQMA$ in $\QHROM$}\label{sec:succinct}

In this section, we present a candidate succinct argument for $\LocalQMA$ in $\QHROM$. Assuming $\QPCP$, this succinct argument also works for all of $\QMA$.

\subsection{The Succinct Protocol $\Pisuc$}
\newcommand{\id}{\mathsf{id}}
\newcommand{\blk}{b}

\paragraph{Notation.} 

Let $L = (L_{\yes}, L_{\no}) \in (k,\gamma)\text{-}\LocalQMA_{s,c}$ for two integers $k,\gamma \in \N$ and two reals $s,c \in [0,1]$ such that $s < c$. Let $m_L$ and $p_L$ be the polynomials and $V_L$ be the $k$-local verifier in~\autoref{defi:localQMA}. Throughout this section, we will always use $n$ to denote the length of the input to $L$, $N = p_L(n)$ to denote the number of qubits in a witness for $V_L$, and $\lambda$ to denote the security parameter. 

We set $\blk = \lambda /3$, and $\ell = N/\blk$. We assume that $\blk$ is an integer and $\ell$ is a power of $2$ for simplicity and without loss of generality since one can always add dummy qubits to the witness.

\begin{figure}[H]
\begin{center}
\begin{tikzpicture}[level/.style={sibling distance=80mm/#1},scale = 0.8]
    \node [circle,draw]{$1$}
      child {
      	node [circle,draw] {$2$}
      	child {
      		node [circle,draw] {$4$}
      		child {
      			node {$\vdots$} 
      			child {node [circle,draw,scale = 0.75] {\footnotesize$\ell+0$}}
      			child {node [circle,draw,scale = 0.75] (a) {\footnotesize$\ell+1$}}
      		}
      		child {node {$\vdots$}}
      	}
      	child {
      		node [circle,draw] {$5$}
      		child {node {$\vdots$}}
      		child {node {$\vdots$}}
      	}
      }
      child {
      	node [circle,draw] {$3$}
      	child {
      		node [circle,draw] {$6$}
      		child {node {$\vdots$}}
      		child {node {$\vdots$}}
      	}
      	child {
      		node [circle,draw] {$7$}
      		child {node {$\vdots$}}
      		child {node {$\vdots$}
      			child {node [circle,draw,scale = 0.75] (b) {\footnotesize$2\ell-2$}}
      			child {node [circle,draw,scale = 0.75] {\footnotesize$2\ell-1$}}
      		}
      	}
      };
      \path (a) -- (b) node [midway] {$\cdots\cdots\cdots\cdots\cdots\cdots\cdots\cdots\cdots\cdots\cdots\cdots$};
\end{tikzpicture}
\caption{An illustration of the labeling of the nodes in the tree $T_\ell$ with $\ell$ leaves}\label{fig:binarytree}
\end{center}
\end{figure}

\paragraph*{The perfect binary tree $T_\ell$.} We will consider a perfect binary tree $T_\ell$ of $\ell$ leafs (see~\autoref{fig:binarytree} for an illustration). Note that $T_\ell$ has $\log\ell$ layers. We label the nodes in $T_\ell$ first from root to leaves and then from left to right, starting with $1$.

For a node $u$ in $T_\ell$, we observe that $u$'s parent is $\lfloor u/2 \rfloor$ if $u$ is not the root (\ie, $u \ne 1$) and $u$'s two children are $2u$ and $2u + 1$ if $u$ is not a leaf (\ie, $u < \ell$). We use $P_u$ to denote the set of nodes consisting of $u$ and all ancestors of $u$. Formally, we have
\[
P_u = \begin{cases}
\{u\} \quad& \text{if $u = 1$,}\\
\{u\} \cup P_{\lfloor u/2 \rfloor} \quad& \text{if $u > 1$.}
\end{cases}
\]

We also define $R_u$ as follows:
\[
R_u = \{\text{$v\in P_u$ or $\lfloor v/2\rfloor \in P_u$}  : v \in [2\ell - 1] \}.
\]
That is, a node $v$ belongs to $R_u$ if either $v$ is in $P_u$ or the parent of $v$ is in $P_u$. Also, for a set of nodes $S \subseteq [2\ell - 1]$, we set $R_S = \bigcup_{u \in S} R_u$.

Given an $N$-qubit state $\sigma$, we define the following commitment algorithm (\autoref{algo:commit}) and the corresponding local decommitment algorithm (\autoref{algo:decommit}).

\begin{algorithm}[H]
	\caption{Algorithm for committing to an $N$-qubit quantum state}\label{algo:commit}
	\SetKwProg{Fn}{Function}{}{}
	\Fn{$\commit^{\caG}(\sigma,N,\lambda)$}{
		\KwIn{$\sigma$ is an $N$-qubit quantum state, $\lambda$ is the security parameter (recall that $\lambda = 3\blk$)}
		Let $\ell = N/\blk$\; 
		For each node $u$ in $T_\ell$, create a $\blk$-qubit register $\state^{(u)}$\;
		Store $\sigma$ in registers $\state^{(\ell)},\state^{(\ell+1)},\dotsc,\state^{(2\ell-1)}$\;
		\For{$u$ from $\ell - 1$ down to $1$}{
			Initialize $\state^{(u)}$ as $\spz{0^\blk}$\;
			Apply $\caG$ on $\state^{(2u)}$, $\state^{(2u+1)}$, and $\state^{(u)}$\;
		}
		\Return all qubits in $\{ \state^{(u)} \}_{u \in [2\ell - 1]}$ 
            \tcp*{Here $\state^{(1)}$ is the commitment to be sent to the verifier, while the prover keeps all other states $\{ \state^{(u)} \}_{u \in \{2,\dotsc,2\ell - 1\}}$}
	}
\end{algorithm}

\begin{algorithm}[H]
	\caption{Algorithm for recovering part of the original quantum state}\label{algo:decommit}
	\SetKwProg{Fn}{Function}{}{}
	\Fn{$\decommit^{\caG^\dagger}(N,\lambda, S, \{ \eta_{u} \}_{u \in R_S})$}{
		\KwIn{$S \subseteq \{\ell,\dotsc,2\ell - 1\}$ is a subset of leafs in $T_\ell$, for each $u \in R_S$,  $\eta_u$ is a $\blk$-qubit quantum state, $\lambda$ is the security parameter. (We remark that $\{ \eta_{u} \}_{u \in (R_S \setminus \{1\})}$ are the states provided by the prover to the verifier.)}
		Let $\ell = N/\blk$\; 
		For each node $u$ in $R_S$, create a $\blk$-qubit register $\state^{(u)}$, and store $\eta_u$ at $\state^{(u)}$\;
		\For{$u \in R_S \cap [\ell - 1]$, from smallest to the largest\label{line:order-decom}}{
			Apply $\caG^\dagger$ on $\state^{(2u)}$, $\state^{(2u+1)}$, and $\state^{(u)}$\;
			Measure $\state^{(u)}$ in the computational basis to obtain an outcome $z \in \bits^{\blk}$\;
			\If{$z \ne 0^{\blk}$}{\Return $\perp$ \tcp{$\perp$ means the check fails}} 
		}
		\Return all qubits in $\{ \state^{(u)} \}_{u \in S}$\;
	}
\end{algorithm}

Finally, we are ready to specify the following candidate succinct argument for $L \in (k,\gamma)\text{-}\LocalQMA_{s,c}$.

\begin{construction}{The candidate succinct argument $\Pisuc$ for $L \in (k,\gamma)\text{-}\LocalQMA_{s,c}$}
	\begin{itemize}
		\item Both prover ($\caP$) and verifier ($\caV$) get access to a Haar random quantum unitary $\caG$ acting on $3\blk = \lambda$ qubits and its inverse $\caG^\dagger$. They also both get an input $x \in \bits^n$ to $L$. The goal for the prover is to convince the verifier that $x \in L_{\yes}$.
		
		Let $\ell = N/\blk$ and we assume that $\ell = 2^d$ for $d \in \N$.
		
		\item (\textbf{First message: $\caP \to \caV$}) The honest prover $\caP$ acts as follows: If $x \in L_{\no}$, $\caP$ aborts immediately. Otherwise, $\caP$ finds an $N$-qubit state $\sigma$ such that $V_L$ accepts with probability at least $c$, and runs $\commit(\sigma,N,\lambda)$ to obtain qubits $\{ \eta_{u} \}_{u \in [2\ell - 1]}$.
		
		$\caP$ then sends $\eta_{1}$ to $\caV$.
		
		\item (\textbf{Second message: $\caV \to \caP$}) $\caV$ now simulates the local verifier $V_L$: $\caV$ first draws $i \getsR [m_L(n)]$ and sends $i$ to $\caP$, and then computes a $k$-size subset $S_i \subseteq [N]$ and a $\gamma$-size circuit $C_i$ acting on $k$ qubits, according to~\autoref{defi:localQMA}.
		
		Let $W_i$ be the set of leaves in $T_\ell$ that contain the qubits indexed by $S_i$. That is,
		\[
		W_i = \{ \ell + \lfloor (u-1)/\blk \rfloor : u \in S_i \}.
		\]
		
		\item (\textbf{Third message: $\caP \to \caV$}) The honest prover $\caP$ sends $\{\eta_u\}_{u \in R_{W_i}, u \ne 1}$ to $\caV$. $\caV$ then runs $\decommit(N,\lambda,W_i,\{ \eta_u \}_{u \in R_{W_{i}}})$ (note that $\caV$ already has $\eta_1$). If $\decommit$ returns $\perp$, $\caV$ rejects immediately.
		
		Otherwise, $\caV$ continues the simulation of $V_L$ by running $C_i$ using $\{ \eta_{u} \}_{u \in W_i}$, and $\caV$ accepts if and only if $V_L$ accepts.
	\end{itemize}
\end{construction}

\subsection{Analysis of $\Pisuc$}

We say a prover $\caP$ is \emph{semi-honest}, if $\caP$ commits to an arbitrary $N$-qubit state (as opposed to the true ground state) $\sigma$ in the first message but indeed follows $\Pisuc$. We remark that, unlike an honest prover, a semi-honest prover may not necessarily commit to a state that makes $V_L$ accepts with probability at least $c$.\footnote{In particular, a semi-honest prover may still commit to some state even when $x \in L_{\no}$, while an honest prover would abort when $x \in L_{\no}$.}

Now we prove the completeness and succinctness of $\Pisuc$. We also show $\Pisuc$ is sound against semi-honest provers.\footnote{We remark that semi-honest prover security is more of a sanity check than a solid contribution since it is not hard to construct a trivial protocol that satisfies this semi-honest prover security.}

\begin{theorem}\label{theo:Pisuc-c-and-s}
	Let $\Pisuc$ be the protocol between $\caP$ and $\caV$ for the promise language $L \in (k,\gamma)\text{-}\LocalQMA_{s,c}$. For every $x \in \bits^n$, the following hold:
	\begin{description}
		\item[] \textbf{Completeness:}
		{ 	If $x \in L_{\sf yes}$, then for every $\caG \in \Haar(2^\lambda)$,
			\[
			\Pr[(\caV^{\caG,\caG^\dagger} \leftrightarrows \caP^{\caG,\caG^\dagger} )(x) = 1] \ge c.
			\]}
		\item[] \textbf{Soundness against semi-honest provers:}
		{ 	If $x \in L_{\sf no}$, then for every $\caG \in \Haar(2^\lambda)$ and every semi-honest prover $\caP$,
			\[
			\Pr[(\caV^{\caG,\caG^\dagger} \leftrightarrows (\caP)^{\caG,\caG^\dagger} )(x) = 1] \le s.
			\]}
		\item[]  \textbf{Succinctness:} {
			$\caP$ and $\caV$ communicate at most $O(k \cdot \lambda \cdot \log n)$ qubits in total. }
		
		\item[] \textbf{Efficiency:} {$\caV$ runs in $\poly(n,k,\gamma)$ time. If $V_L$ is strongly explicit, then $\caV$ runs in $O(k \cdot \lambda \cdot \log n + \poly(\log n,k,\gamma))$ time.
		}
	\end{description}
\end{theorem}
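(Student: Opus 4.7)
The plan is to prove the four parts in turn, with the first two reducing to the observation that an honest or semi-honest prover produces exactly the registers that $\decommit$ expects, so $\caG^\dagger$ undoes every application of $\caG$ performed during $\commit$ and the entire interaction becomes equivalent to running $V_L$ on the committed state. The last two parts are direct counting arguments on the tree $T_\ell$.

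For completeness, I would first argue by induction on the nodes of $T_\ell$ (processed from leaves to root) that $\commit^\caG$ produces a global pure state that, when restricted to the registers $\{\state^{(u)}\}_{u \in R_{W_i}}$, is exactly the state $\decommit^{\caG^\dagger}$ will receive. Then I would argue by induction on the loop of $\decommit$ (processed from root to leaves, consistent with \autoref{line:order-decom}) that each application of $\caG^\dagger$ is the exact inverse of the corresponding step of $\commit$, so the ancestor register $\state^{(u)}$ returns deterministically to $\spz{0^\blk}$ and the measurement succeeds. At the end of $\decommit$, the registers $\{\state^{(u)}\}_{u \in W_i}$ hold the qubits of $\sigma$ indexed by $S_i$, and by definition of $L \in (k,\gamma)\text{-}\LocalQMA_{s,c}$ the circuit $C_i$ accepts with probability at least $c$ averaged over $i \getsR [m_L(n)]$.

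For soundness against semi-honest provers, the same inductive argument applies verbatim: a semi-honest prover by definition runs $\commit^\caG$ on \emph{some} $N$-qubit state $\sigma$ and sends the honest third message, so $\decommit$ again returns $\sigma_{S_i}$ exactly. Since $x \in L_{\no}$, the definition of $(k,\gamma)\text{-}\LocalQMA_{s,c}$ guarantees that $V_L$, and hence $\caV$, accepts with probability at most $s$ over the choice of $i$ and the internal randomness of $C_i$.

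For succinctness, the prover's first message is a single $\blk$-qubit register, and the third message consists of the registers $\{\eta_u\}_{u \in R_{W_i}, u \ne 1}$. Since $|S_i| \le k$ we have $|W_i| \le k$, and each $R_u$ for $u$ a leaf is the union of $u$'s root-path $P_u$ (of length $d = \log \ell$) with the siblings of those ancestors, giving $|R_u| \le 2\log\ell + 1$. Hence $|R_{W_i}| \le O(k \log \ell) = O(k \log n)$, and each register carries $\blk = \lambda/3$ qubits, for $O(k \cdot \lambda \cdot \log n)$ total communication. For efficiency, $\caV$ spends $\poly(n,k,\gamma)$ (or $\poly(\log n,k,\gamma)$ in the strongly explicit case) computing $S_i$ and $C_i$, then performs $O(k \log n)$ calls to the $\lambda$-qubit oracle $\caG^\dagger$ with associated computational-basis measurements, then runs the $\gamma$-size circuit $C_i$; summing gives the stated bound. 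I expect no real obstacle here: the only subtlety is getting the decommit order on \autoref{line:order-decom} right so that $\caG^\dagger$ is applied only after the subtree below $u$ has been ``unpacked,'' and this is ensured by processing $R_S \cap [\ell-1]$ in increasing order, which matches the reverse of the commit loop.
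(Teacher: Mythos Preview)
The proposal is correct and follows essentially the same approach as the paper: both reduce completeness and semi-honest soundness to the observation that $\decommit^{\caG^\dagger}$ exactly inverts, on the lightcone of $W_i$, the gates applied by $\commit^{\caG}$, so that $\caV$ faithfully simulates $V_L$ on the committed state $\sigma$; the paper phrases this via a lightcone argument on $U_{\sf com}$ and $U_{\sf decom}=U_{\sf com}^\dagger$, while you phrase it as an induction on the decommit loop, but the content is identical, and the succinctness/efficiency counts agree.

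One small slip in your final sentence: you say $\caG^\dagger$ at $u$ should be applied ``only after the subtree below $u$ has been unpacked,'' but in fact increasing-$u$ order processes the parent \emph{before} its children---which is exactly what is needed, since $\commit$ applied $\caG_{(u)}$ after processing the children, so its inverse must come first. Your formal claim (``increasing order matches the reverse of the commit loop'') is right; only the informal gloss is backwards.
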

\begin{proof}
	We first establish the succinctness part. Examining the protocol $\Pisuc$, one can see that the first message takes $O(\lambda)$ qubits, the second messages takes $O(\log m_L(n)) = O(\log n)$ classical bits, and the third message takes $O\left( |R_{W_i}| \cdot \lambda \right)$ qubits. Note that $|R_{W_i}| \le |W_i| \cdot O(\log \ell) \le k \cdot O(\log N) \le O(k \cdot \log n)$, the total communication complexity is thus bounded by $O(k \cdot \lambda \cdot \log n )$. 
	
	For the running time of $\caV$, one can see that its running time is dominated by the running time of $\decommit(N,\lambda,W_i,\{ \eta_u \}_{u \in R_{W_i}})$ and the running time of $V_L$ computing $W_i$ and $C_i$, which are at most $O(k \cdot \lambda\cdot \log N)$ and $\poly(n,k,\gamma)$ respectively. The latter becomes ($\poly(\log n,k,\gamma)$ if $V_L$ is strongly explicit.
	
	Now we prove the completeness. Let $\caG_{(u)}$ be a $\caG$ gate applying on registers $\state^{(2u)}$, $\state^{(2u+1)}$, and $\state^{(u)}$. Then we know for the honest prover $\caP$, when $x \in L_{\yes}$, it prepares an $N$-qubit state $\sigma$ that makes $V_L$ accept with probability at least $c$, and then applies $U_{\sf com} \coloneqq \caG_{(\ell-1)} \cdot \dotsc\cdot\caG_{(1)}$ to $\sigma \otimes \proj{0}_{\state^{(1)},\dotsc,\state^{(\ell-1)}}$.
	
	Let $U_{\sf decom} \coloneqq U_{\sf com}^{\dagger} = \caG_{(1)}^\dagger\cdot \dotsc \cdot \caG_{(\ell-1)}^\dagger$. Recall that verifier $\caV$ at the end simulates the quantum circuit $C_i$ only on registers in $\{ \state^{(u)} \}_{u \in W_i}$. We now argue that $\caV$ is effectively simulating $C_i$ on
	\[
	U_{\sf decom}^{\dagger} U_{\sf com} \sigma \otimes \proj{0}_{\state^{(1)},\dotsc,\state^{(\ell-1)}} = \sigma \otimes \proj{0}_{\state^{(1)},\dotsc,\state^{(\ell-1)}}.
	\]
	
	The reason is that $\decommit(N,\lambda,W_i,\{ \eta_u \}_{u \in R_{W_i}})$ performs all gates in $U_{\sf decom}$ that reside in the lightcone of the registers $\{ \state^{(u)} \}_{u \in W_i}$ in the chronological order (see Line~\ref{line:order-decom} of \autoref{algo:decommit}). Also, since $\caP$ starts with the state $\sigma \otimes \proj{0}_{\state^{(1)},\dotsc,\state^{(\ell-1)}}$, $\decommit$ never outputs $\bot$. Therefore, $\caV$ is simulating $V_L$ faithfully on $\sigma$, meaning that it accepts with probability at least $c$.
	
	Finally, we establish the soundness against semi-honest provers. The argument above for completeness indeed established that whenever the prover commits to a state $\sigma$ in the first message and follows $\Pisuc$ (\ie, the prover is semi-honest), for every possible $\caG$,	the acceptance probability of $\caV$ equals the acceptance probability of the simulated $V_L$ on $\sigma$. Hence, when $x \in L_{\no}$, for every semi-honest prover and every possible $\caG$, the acceptance probability of $\caV$ is at most $s$.
\end{proof}

We conjecture that the soundness also holds more generally.
\begin{conjecture}[$\Pisuc$ is sound in $\QHROM$]\label{conj:soundness}
	Let $\Pisuc$ be the protocol between $\caP$ and $\caV$ for the promise language $L \in (k,\gamma)\text{-}\LocalQMA_{s,c}$. For every $x \in \bits^n$, the following hold:
	\begin{description}
		\item[] \textbf{Soundness:}
		{ 
			If $x \in L_{\no}$, then for every $t \in \N$ and all (potentially malicious) $\caP^*$ that makes at most $t$ total queries to $\caG$ and $\caG^\dagger$, for some $\delta = \delta(t,\lambda) = \poly(t)/2^{\Omega(\lambda)}$, it holds that
			\[
			\Pr_{\caG \getsR \Haar(2^\lambda)} \left[ \Pr[(\caV^{\caG,\caG^\dagger}  \leftrightarrows (\caP^*)^{\caG,\caG^\dagger} ) (x) = 1)] \ge s + \delta \right] \le \delta.
			\]
		}
	\end{description}
\end{conjecture}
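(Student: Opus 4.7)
I would attempt to prove \autoref{conj:soundness} via a quantum \emph{extraction} argument. Given a cheating prover $\caP^*$ that makes at most $t$ total queries to $\caG$ and $\caG^\dagger$, the goal is to construct a purely query-based extractor $\mathcal{E}$ that produces, from $\caP^*$'s post-first-message state together with its query transcript, an $N$-qubit witness $\tdsigma$. The simulation lemma I would target is that the acceptance probability of $(\caV^{\caG,\caG^\dagger} \leftrightarrows (\caP^*)^{\caG,\caG^\dagger})(x)$ differs from the acceptance probability of $V_L$ run on $\tdsigma$ by at most $\poly(t)/2^{\Omega(\lambda)}$, with overwhelming probability over $\caG \getsR \Haar(2^\lambda)$. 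Since $V_L$ has soundness $s$ on $L_{\no}$ and $\tdsigma$ is a bona fide $N$-qubit state, \autoref{conj:soundness} would follow by a Markov-type argument on the randomness of $\caG$ to deliver the $s + \delta$ bound with probability $\ge 1 - \delta$; this is exactly the reduction that~\autoref{theo:Pisuc-c-and-s} already carries out trivially for the semi-honest case.

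\textbf{The extraction strategy.} I would build $\mathcal{E}$ inductively along the tree $T_\ell$, starting from the root. The hope is that for each internal node $u$, once $\state^{(u)}$ is fixed and only few queries have touched it, applying $\caG^\dagger$ and projecting the parent slot onto $\spz{0^\blk}$ essentially uniquely determines a state on the two children registers. Iterating this coherently from root to leaves defines a map reading off an extracted witness on $\{\state^{(u)}\}_{u \in [\ell,2\ell-1]}$. The central sub-lemma is a \emph{binding} statement: for every node $u$, any alternative decommitment the prover can coherently present at the children has trace-distance weight at most $\poly(t)/2^{\Omega(\blk)}$ outside the extracted one, which I would try to prove via small-dimension Haar concentration (e.g., via second-moment calculations over $\caG$ applied to low-rank inputs). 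I would then run a hybrid argument over the $O(k \log n)$ nodes in $R_{W_i}$ and the at most $t$ adaptive queries of the third phase, replacing the prover's decommitment at each node with what $\mathcal{E}$ would have produced and losing only negligible trace distance per step.

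\textbf{Main obstacle.} The central difficulty, as the authors themselves highlight in their first open problem, is the absence of a quantum analog of Zhandry's compressed oracle technique for Haar random unitaries. In $\QROM$ the compressed oracle keeps a sparse record of the adversary's queries and lets one argue that unqueried outputs are essentially random and independent of its state, which is exactly what makes the classical Merkle binding argument go through against quantum adversaries in~\cite{ChiesaMS19}. For a Haar random $\caG$, no such sparse representation is known, and approximations via unitary $t$-designs or pseudorandom unitaries give concentration only in small moments and do not obviously support an adaptive inductive binding argument against provers holding large superpositions over multiple plausible decommitments. In particular, the step where I want to say ``the prover has not meaningfully queried $\caG^\dagger$ on the sibling registers, hence those registers are essentially Haar-typical'' has no clean formalization in $\QHROM$. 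I expect that genuinely proving \autoref{conj:soundness} will require either developing a ``compressed Haar'' framework in the spirit of~\cite{Zhandry19}, or an entirely different route bypassing extraction, \eg, via direct operator-norm bounds obtained from sufficiently-deep $t$-design approximations; formalizing either route is, in my estimation, the dominant obstacle and likely demands genuinely new tools.
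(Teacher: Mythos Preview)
The statement you are addressing is a \emph{conjecture} in the paper, not a theorem: the authors explicitly leave it unproved and list it as \autoref{conj:soundness}. There is therefore no proof in the paper to compare your proposal against; the only established case is the semi-honest one in \autoref{theo:Pisuc-c-and-s}, together with the heuristic discussion in \autoref{sec:suc-discussions}.

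Your write-up is a research plan rather than a proof, and you correctly acknowledge this in your final paragraph. The main obstacle you isolate---the absence of a compressed-oracle analogue for Haar random unitaries---is precisely what the authors flag as their first open problem, so your assessment is consistent with theirs. The extraction strategy you sketch (root-to-leaves binding via Haar concentration, then a hybrid over the $O(k\log n)$ nodes in $R_{W_i}$) is a natural first attempt, but the adaptive binding sub-lemma you state has no known formalization in $\QHROM$, as you concede. One additional caution from the paper's own discussion in \autoref{sec:suc-discussions}: via the Schr\"odinger--HJW theorem, a unitary $W$ on the $\data$ registers that swaps any committed $\spz{\psi}$ for an arbitrary $\spz{\phi}$ \emph{always exists} (both reduced states on $\com$ are near maximally mixed under a Haar $\caG$), so any binding argument must exploit the query bound on $\caP^*$ in an essential way. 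In particular, low-moment Haar concentration on fixed low-rank inputs cannot by itself rule out such $W$; your proposed ``small-dimension Haar concentration'' step would need to be upgraded to something that tracks the adversary's adaptive query state, which is exactly the missing compressed-Haar tool.
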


Comment: The difference between this conjecture and our main theorem (\autoref{theo:Pisuc-c-and-s}) is that in the conjecture we require soundness against all unbounded provers instead of only against semi-honest provers.
%\lnote{It should be $\caP^*$ instead of $\caP$. I noticed that~\autoref{theo:Pisuc-c-and-s}'s soundness against semi-honest part has some issue and fixed it.}

\subsection{Discussions}\label{sec:suc-discussions}

We remark that (1) the constant soundness in \autoref{conj:soundness} and the constant completeness in~\autoref{theo:Pisuc-c-and-s} can be easily amplified to $n^{-\omega(1)}$ and $1-n^{-\omega(1)}$ by repeating the protocols $\log^2 n$ times, and (2) assuming $\QPCP$, the protocol works for all languages in $\QMA$.

\begin{corollary}\label{cor:suc-protocol-for-LocalQMA}
	Assuming \autoref{conj:soundness}, there is a protocol for $L \in (k,\gamma)\text{-}\LocalQMA_{s,c}$ with $\lambda \cdot \polylog(n)$ communication complexity, completeness $1-n^{-\omega(1)}$ and soundness $n^{-\omega(1)}$ in $\QHROM$. Also, if $V_L$ is strongly explicit, then the verifier running time of the protocol is also bounded by $\lambda \cdot \polylog(n)$.
	
	Moreover, if we further assume that $\QPCP$ holds, then the aforementioned succinct protocol exists for every $L \in \QMA$.
\end{corollary}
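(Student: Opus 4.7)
The plan is to establish the corollary by applying sequential repetition to the basic protocol $\Pisuc$ of \autoref{theo:Pisuc-c-and-s}, leveraging \autoref{conj:soundness} for single-round soundness, and then invoking \autoref{cor:QPCP-and-LocalQMA} to lift the result to all of $\QMA$ under the $\QPCP$ conjecture.

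First I would define the amplified protocol $\Pisuc^{(T)}$ in which the prover and verifier run $T = \log^2 n$ copies of $\Pisuc$ sequentially on the same Haar oracle $\caG$, with the verifier drawing a fresh index $i_j \getsR [m_L(n)]$ each round and accepting iff at least $\tau T$ of the rounds accept, for a threshold $\tau = (c+s)/2$. Since each round contributes $O(k \cdot \lambda \cdot \log n)$ qubits of communication and $\poly(\log n, k, \gamma)$ time (in the strongly explicit case) by \autoref{theo:Pisuc-c-and-s}, the totals are $\lambda \cdot \polylog(n)$ as required.

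For completeness, the honest prover recommits in each round to a fresh copy of a state $\sigma$ that makes $V_L$ accept with probability at least $c$, so \autoref{theo:Pisuc-c-and-s} gives round-acceptance probability $\ge c$ independently across rounds, and a Chernoff bound yields $\Pr[\text{accept}] \ge 1 - \exp(-\Omega(T(c-\tau)^2)) = 1 - n^{-\omega(1)}$. For soundness, I would set $\lambda = \polylog(n)$ large enough so that the $\delta = \poly(t)/2^{\Omega(\lambda)}$ of \autoref{conj:soundness} satisfies $\delta = n^{-\omega(1)}$ for the overall prover's query budget $t$. Conditioning on the ``good oracle'' event (probability $\ge 1-\delta$) that every single-round $t$-query adversary accepts with probability at most $s + \delta$, let $X_j$ be the indicator that round $j$ accepts against an adaptive malicious $\caP^*$. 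Treating $\caP^*$'s residual strategy after rounds $1,\dotsc,j-1$ as a single-round prover of at most $t$ queries, single-round soundness yields $\Pr[X_j = 1 \mid X_1,\dotsc,X_{j-1}] \le s + \delta$; an Azuma/Hoeffding tail bound for supermartingales then gives $\Pr[\sum_j X_j \ge \tau T] \le \exp(-\Omega(T(\tau - s - \delta)^2)) = n^{-\omega(1)}$, and a union bound against the bad-oracle event closes the argument. The $\QMA$ extension is immediate: by \autoref{cor:QPCP-and-LocalQMA}, $\QPCP$ places every $L \in \QMA$ into $(k,\gamma)\text{-}\LocalQMA_{s,c}$ for some constants, so the succinct protocol above applies to it verbatim.

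The main obstacle will be the soundness analysis, since although round-by-round soundness holds adversarially, the indicators $X_j$ are not independent: $\caP^*$'s quantum state persists between rounds and its queries to $\caG,\caG^\dagger$ are shared globally. The crux is to justify that the history-conditioned continuation of $\caP^*$ is a legitimate single-round malicious prover so that \autoref{conj:soundness} applies to each conditional expectation, and to account for the total query budget $t$ (roughly the sum of per-round budgets, which is still $\polylog(n)$) when choosing $\lambda = \polylog(n)$ so that $\delta$ remains negligible throughout.
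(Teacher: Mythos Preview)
Your approach matches the paper's: the corollary is stated without proof beyond the preceding one-line remark that one amplifies by repeating $\Pisuc$ sequentially $\log^2 n$ times and then invokes \autoref{cor:QPCP-and-LocalQMA} under $\QPCP$. Your threshold/Chernoff/Azuma fleshing-out is the natural way to make that precise, and the adaptive-soundness subtlety you flag (the round-$j$ residual prover depends on $\caG$ through the history, so \autoref{conj:soundness} does not literally apply per round) is genuine but likewise not addressed in the paper.
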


How easy is it for the prover to cheat after having sent the commitment to the verifier in the quantum Merkle tree construction? We believe (but cannot yet prove; see Conjecture~\ref{conj:soundness}) that a computationally bounded prover will not be able to make the verifier accept. However, a computationally unbounded prover can. We now demonstrate this by the application of Schrödinger–HJW theorem ~\cite{hughston1993complete} to the simple toy example of a Merkle tree with depth one. Suppose $\spz{\psi}$ is the $2b$-qubit state $\caP$ initially committed to, and $\spz{\phi}$ is another $2b$-qubit state that $\caP$ wishes to cheat by switching $\spz{\psi}$ with. Mathematically the process of committing, switching the initial state and lastly decommitting writes 
\[
\caG^{\dagger} (W \otimes I) \caG (\spz{\psi} \otimes \spz{0^b})  \approx (\spz{\phi} \otimes \spz{0^b}),
\]
where in the above we think of $\spz{0^b}$ as the parent and see that the initially committed $2b$-qubit state $\spz{\psi}$ can be changed to another completely different $2b$-qubit state $\spz{\phi}$ by applying Schrödinger-HJW theorem (\ie, application of a $W \otimes I$) for a suitable unitary $W$ that acts on the first $2b$ qubits. We note that such $W$ exists, because the reduced density matrix of the last $b$ qubits of both $\caG (\spz{\psi} \otimes \spz{0^b})$ and  $\caG (\spz{\phi} \otimes \spz{0^b})$ are very close to the maximally mixed state, for any two fixed states $\spz{\phi}$ and $\spz{\psi}$. However, we conjecture that finding $W$ requires computationally unbounded prover. For example, in the foregoing equation a direct way to solve for $W$ would require solving a linear system of equations that is exponentially large. Moreover, the oracle $\caG$ is fully random and does not afford any structure we can utilize to reduce the computation.

This is exacerbated by the fact that finding a unitary $W$ that makes the two sides approximately equal can also make the verifier accept with sufficiently high probability. We leave this resolution as a mathematical challenge.
	
	\section*{Acknowledgments} L.C. would like to thank Jiahui Liu and Qipeng Liu for helpful discussions and pointing out many related works. This work was done while L.C. did an internship at IBM Quantum Research. L. C. is supported by NSF CCF-2127597 and an IBM Fellowship.
	
	\bibliographystyle{quantum}
	\bibliography{literature}

\newcommand{\sortsilent}[1]{} \newcommand{\prelim}{Preliminary version in}
\begin{thebibliography}{10}

\bibitem{Merkle87}
Ralph~C. Merkle.
\newblock ``A digital signature based on a conventional encryption function''.
\newblock In Advances in Cryptology - {CRYPTO} '87, {A} Conference on the
  Theory and Applications of Cryptographic Techniques, Santa Barbara,
  California, USA, August 16-20, 1987, Proceedings.
\newblock \href{https://dx.doi.org/10.1007/3-540-48184-2\_32}{Volume 293 of
  Lecture Notes in Computer Science, pages 369--378}.
\newblock Springer~(1987).

\bibitem{BonehDFLSZ11}
Dan Boneh, {\"{O}}zg{\"{u}}r Dagdelen, Marc Fischlin, Anja Lehmann, Christian
  Schaffner, and Mark Zhandry.
\newblock ``Random oracles in a quantum world''.
\newblock In Advances in Cryptology - {ASIACRYPT} 2011 - 17th International
  Conference on the Theory and Application of Cryptology and Information
  Security, Seoul, South Korea, December 4-8, 2011. Proceedings.
\newblock \href{https://dx.doi.org/10.1007/978-3-642-25385-0\_3}{Volume 7073 of
  Lecture Notes in Computer Science, pages 41--69}.
\newblock Springer~(2011).

\bibitem{BrassardCC88}
Gilles Brassard, David Chaum, and Claude Cr{\'{e}}peau.
\newblock ``Minimum disclosure proofs of knowledge''.
\newblock \href{https://dx.doi.org/10.1016/0022-0000(88)90005-0}{J. Comput.
  Syst. Sci. {\bf 37}, 156--189}~(1988).

\bibitem{Kilian92}
Joe Kilian.
\newblock ``A note on efficient zero-knowledge proofs and arguments (extended
  abstract)''.
\newblock In S.~Rao Kosaraju, Mike Fellows, Avi Wigderson, and John~A. Ellis,
  editors, Proceedings of the 24th Annual {ACM} Symposium on Theory of
  Computing, May 4-6, 1992, Victoria, British Columbia, Canada.
\newblock Pages 723--732.
\newblock {ACM}~(1992).
\newblock  url:~\url{https://doi.org/10.1145/129712.129782}.

\bibitem{Micali00}
Silvio Micali.
\newblock ``Computationally sound proofs''.
\newblock {SIAM} J. Comput. {\bf 30}, 1253--1298~(2000).
\newblock  url:~\url{https://doi.org/10.1137/S0097539795284959}.

\bibitem{Ben-SassonCS16}
Eli Ben{-}Sasson, Alessandro Chiesa, and Nicholas Spooner.
\newblock ``Interactive oracle proofs''.
\newblock In Martin Hirt and Adam~D. Smith, editors, Theory of Cryptography -
  14th International Conference, {TCC} 2016-B, Beijing, China, October 31 -
  November 3, 2016, Proceedings, Part {II}.
\newblock Volume 9986 of Lecture Notes in Computer Science, pages 31--60.
\newblock ~(2016).
\newblock  url:~\url{https://doi.org/10.1007/978-3-662-53644-5\_2}.

\bibitem{ChiesaMS19}
Alessandro Chiesa, Peter Manohar, and Nicholas Spooner.
\newblock ``Succinct arguments in the quantum random oracle model''.
\newblock In Dennis Hofheinz and Alon Rosen, editors, Theory of Cryptography -
  17th International Conference, {TCC} 2019, Nuremberg, Germany, December 1-5,
  2019, Proceedings, Part {II}.
\newblock Volume 11892 of Lecture Notes in Computer Science, pages 1--29.
\newblock Springer~(2019).
\newblock  url:~\url{https://doi.org/10.1007/978-3-030-36033-7\_1}.

\bibitem{Chiesa21}
Alessandro Chiesa, Fermi Ma, Nicholas Spooner, and Mark Zhandry.
\newblock ``Post-quantum succinct arguments: Breaking the quantum rewinding
  barrier''.
\newblock In 62nd {IEEE} Annual Symposium on Foundations of Computer Science,
  {FOCS} 2021, Denver, CO, USA, February 7-10, 2022.
\newblock \href{https://dx.doi.org/10.1109/FOCS52979.2021.00014}{Pages 49--58}.
\newblock {IEEE}~(2021).

\bibitem{JiL018}
Zhengfeng Ji, Yi{-}Kai Liu, and Fang Song.
\newblock ``Pseudorandom quantum states''.
\newblock In Hovav Shacham and Alexandra Boldyreva, editors, Advances in
  Cryptology - {CRYPTO} 2018 - 38th Annual International Cryptology Conference,
  Santa Barbara, CA, USA, August 19-23, 2018, Proceedings, Part {III}.
\newblock \href{https://dx.doi.org/10.1007/978-3-319-96878-0\_5}{Volume 10993
  of Lecture Notes in Computer Science, pages 126--152}.
\newblock Springer~(2018).

\bibitem{brandao2016local}
Fernando~GSL Brandao, Aram~W Harrow, and Micha{\l} Horodecki.
\newblock ``Local random quantum circuits are approximate polynomial-designs''.
\newblock Communications in Mathematical Physics {\bf 346}, 397--434~(2016).
\newblock  url:~\url{https://doi.org/10.1007/s00220-016-2706-8}.

\bibitem{Zhandry19}
Mark Zhandry.
\newblock ``How to record quantum queries, and applications to quantum
  indifferentiability''.
\newblock In Advances in Cryptology - {CRYPTO} 2019 - 39th Annual International
  Cryptology Conference, Santa Barbara, CA, USA, August 18-22, 2019,
  Proceedings, Part {II}.
\newblock \href{https://dx.doi.org/10.1007/978-3-030-26951-7\_9}{Volume 11693
  of Lecture Notes in Computer Science, pages 239--268}.
\newblock Springer~(2019).

\bibitem{gunn2022commitments}
Sam Gunn, Nathan Ju, Fermi Ma, and Mark Zhandry.
\newblock ``Commitments to quantum states''.
\newblock In Barna Saha and Rocco~A. Servedio, editors, Proceedings of the 55th
  Annual {ACM} Symposium on Theory of Computing, {STOC} 2023, Orlando, FL, USA,
  June 20-23, 2023.
\newblock \href{https://dx.doi.org/10.1145/3564246.3585198}{Pages 1579--1588}.
\newblock {ACM}~(2023).

\bibitem{BroadbentJSW20}
Anne Broadbent, Zhengfeng Ji, Fang Song, and John Watrous.
\newblock ``Zero-knowledge proof systems for {QMA}''.
\newblock \href{https://dx.doi.org/10.1137/18M1193530}{{SIAM} J. Comput. {\bf
  49}, 245--283}~(2020).

\bibitem{ColadangeloVZ20}
Andrea Coladangelo, Thomas Vidick, and Tina Zhang.
\newblock ``Non-interactive zero-knowledge arguments for qma, with
  preprocessing''.
\newblock In Daniele Micciancio and Thomas Ristenpart, editors, Advances in
  Cryptology - {CRYPTO} 2020 - 40th Annual International Cryptology Conference,
  {CRYPTO} 2020, Santa Barbara, CA, USA, August 17-21, 2020, Proceedings, Part
  {III}.
\newblock \href{https://dx.doi.org/10.1007/978-3-030-56877-1\_28}{Volume 12172
  of Lecture Notes in Computer Science, pages 799--828}.
\newblock Springer~(2020).

\bibitem{BroadbentG22}
Anne Broadbent and Alex~Bredariol Grilo.
\newblock ``Qma-hardness of consistency of local density matrices with
  applications to quantum zero-knowledge''.
\newblock \href{https://dx.doi.org/10.1137/21M140729X}{{SIAM} J. Comput. {\bf
  51}, 1400--1450}~(2022).

\bibitem{AharonovALV09}
Dorit Aharonov, Itai Arad, Zeph Landau, and Umesh~V. Vazirani.
\newblock ``The detectability lemma and quantum gap amplification''.
\newblock In Proceedings of the 41st Annual {ACM} Symposium on Theory of
  Computing, {STOC} 2009, Bethesda, MD, USA, May 31 - June 2, 2009.
\newblock \href{https://dx.doi.org/10.1145/1536414.1536472}{Pages 417--426}.
\newblock {ACM}~(2009).

\bibitem{AharonovAV13}
Dorit Aharonov, Itai Arad, and Thomas Vidick.
\newblock ``Guest column: the quantum {PCP} conjecture''.
\newblock {SIGACT} News {\bf 44}, 47--79~(2013).
\newblock  url:~\url{https://doi.org/10.1145/2491533.2491549}.

\bibitem{hughston1993complete}
Lane~P Hughston, Richard Jozsa, and William~K Wootters.
\newblock ``A complete classification of quantum ensembles having a given
  density matrix''.
\newblock Physics Letters A {\bf 183}, 14--18~(1993).
\newblock  url:~\url{https://doi.org/10.1016/0375-9601(93)90880-9}.

\end{thebibliography}
	
\end{document}